\newtheorem{theorem}{Theorem}
\begin{document}

\title{The Role of Large-Scale Fading in Uplink Massive MIMO Systems}

\author{\IEEEauthorblockN {Ang Yang, Zunwen He, Chengwen Xing, Zesong Fei, and Jingming Kuang}
\thanks{The authors are with School of Information and Electronics,
Beijing Institute of technology, Beijing 100081, China
(email:taylorkingyang@163.com, hezunwen@bit.edu.cn,
chengwenxing@ieee.org, feizesong@bit.edu.cn, and
JMKuang@bit.edu.cn).}

}

\maketitle

\begin{abstract}
In this correspondence, we analyze the ergodic capacity of a large
uplink multi-user multiple-input multiple-output (MU-MIMO) system
over generalized-$K$ fading channels. In the considered scenario,
multiple users transmit their information to a base station equipped
with a very large number of antennas. Since the effect of fast
fading asymptotically disappears in massive MIMO systems,
large-scale fading becomes the most dominant factor for the ergodic
capacity of massive MIMO systems. Regarding this fact, in our work
we concentrate our attention on the effects of large-scale fading
for massive MIMO systems. Specifically, some interesting and novel
lower bounds of the ergodic capacity have been derived with both
perfect channel state information (CSI) and imperfect CSI.
Simulation results assess the accuracy of these analytical
expressions.
\end{abstract}

\begin{keywords}
Massive MIMO, large-scale fading, ergodic capacity.
\end{keywords}

\IEEEpeerreviewmaketitle

\section{Introduction}

In order to satisfy the ever-increasing demands from explosive
wireless data services, a breakthrough in spectral efficiency is
expected for the next generation wireless systems just as 5G.
Regarding the great success of MIMO technologies
\cite{yindijing_1,shijin_AF}, massive MIMO or large MIMO
technologies have attracted a lot of attention recently as a
promising enabling technology to greatly boost the spectrum
efficiency \cite{shijin_massive,Larsson_uplink}. Furthermore, for 5G
network densification is also of great importance to improve network
capacity by increasing frequency reuse factor. Therefore, wireless
designers are faced with a challenging task, which is how to
determine the coverage range of massive MIMO base stations (BSs). In
other words, the channel fading characteristics of massive MIMO
should be carefully investigated.

Deploying very large numbers of antennas at the transmitter or
receiver, linear signal processing can substantially reduce the
jitters from various fast fading \cite{large_number}. However, as
the other part of the overall fading, the large-scale fading can not
be simply neglected in the massive MIMO \cite{Larsson_uplink}. It is
should be highlighted that to the best of the authors' knowledge, in
the existing works on massive MIMO systems only the effects of fast
fading have been investigated in detail \cite{Larsson_uplink}, and
the large-scale fading is simply assumed to be constant and known a
priori. In practical scenarios, the distributions of the large-scale
fading will largely vary in different scenarios, such as urban and
open areas. The work on the effect of the large-scale fading on
massive MIMO seems largely open up to date.

Motivated by this fact, in this correspondence we concentrate our
attention on  large-scale fading by considering the generalized-$K$
fading, which is a generic model that occurs when small-scale fading
is modeled via the Nakagami-$m$ distribution and large-scale fading
via the gamma distribution \cite{generalized_fading_1}. This model
has been demonstrated to effectively approximate most of the fading
and shadowing effects occurring in wireless channels, and also to be
analytically friendlier than the Nakagami-$m$/lognormal model
\cite{C_Zhong_Nakagami}. Moreover, we focus on the uplink of a
massive MU-MIMO system operating over generalized-$K$ fading, where
one BS equipped with $N$ antennas receives the information of $K$
single-antenna mobile users ($1 \ll K \ll N$). Different from
\cite{Larsson_uplink}, we focus on the following two fundamental
questions:

\noindent (1) What is the impact of large-scale
fading on the erogidc capacity of massive MIMO?

\noindent (2)
Can we provide analytical expressions for the ergodic capacity of
massive MIMO over generalized-$K$ fading?

To tackle these problems, new lower bounds for the ergodic capacity
of one user are derived for both perfect channel state information
(CSI) and imperfect CSI, as well as the average ergodic capacity of
all the users in the cell. Our analytical expressions are
substantiated via Monte Carlo simulations. It is shown by our result
that when the large-scale fading parameter $m<2$, the decrease of
$m$ will bring large reduction of the capacity of the system with
both perfect and imperfect CSI. This indicates that massive MIMO can
achieve high frequency reuse in the urban scenario.

\section{System model and Preliminaries}

In our work, we focus on  the uplink MU-MIMO system as shown in Fig.
\ref{system_model_central.eps}, which consists of one BS equipped
with $N$ antennas and $K$ single-antenna mobile users. The users
transmit their data to the BS in the same time-frequency resource
and the $N \times 1$ the received signal column vector at BS equals
to
\begin{align}
\label{received signals} {\bf{y}} = \sqrt {p_u} {\bf{G}}{\bf{x}} +
{\bf{w}},
\end{align}where the $K \times 1$ column vector $\sqrt {p_u}{\bf{x}}$ denotes the signal
transmitted by the $K$ users. Without loss of generality, the
average transmit power of each user is assumed to be $P$. In
addition, symbol ${\bf{G}}$ represents the $N \times K$ channel
matrix between the BS and the $K$ users and the $\{n,k\}^{\rm{th}}$
entry $g_{nk} \buildrel \Delta \over = [{\bf{G}}]_{nk}$ is the
channel coefficient between the $n$th antenna at the BS and the
$k$th user. The $K \times 1$ column vector ${\bf{w}}$ denotes the
additive zero-mean Gaussian white noise with unit variance.

In general, the channel matrix ${\bf{G}}$ is made of two kinds of
fading, i.e., fast fading and large-scale fading. Then each element
of ${\bf{G}}$ can be written as
\begin{align}\label{channel matrix}
{g_{nk}} = {h_{nk}}\sqrt {{\beta _{k}}} ,n = 1,2, \ldots ,N,k = 1,2,
\ldots ,K,
\end{align}where ${h_{nk}}$ is the fast fading coefficient between the $n$th antenna of the
BS and the $k$th user. Symbol $\sqrt {{\beta _{k}}}$ denotes the
large-scale fading from the $k$th user and the BS, which equals
\begin{align}\label{large-scale}
 {{\beta _{k}}}  =\mu_k /D_{k}^v, k = 1,2,
\ldots ,K,
\end{align}
where $D_{k}$ is the distance between the $k$th user to the BS, $v$
is the path-loss exponent with typical values ranging from 2 to 6. The
large-scale fading coefficient $\mu _{k}$ is modeled as independent
and identically distributed (i.i.d.) gamma random variable (RV), the
PDF of which can be expressed as
\begin{align}\label{large-scale 2}
p\left( {{\mu _k}} \right) = \frac{{{\mu _k}^{{m_k} - 1}}}{{\Gamma
\left( {{m_k}} \right){\Omega _k}^{{m_k}}}}\exp \left( { -
\frac{{{\mu _k}}}{{{\Omega _k}}}} \right),{\mu _k},{\Omega _k},{m_k}
> 0,
\end{align}where $\Gamma(\cdot)$ denotes the
gamma function. Moreover, denoting $E[\cdot]$ as the expectation of
a RV, $m_k$ and $\Omega _k= E[\mu_k]/{m_k}$ are the shape and scale
parameters of the gamma distribution, respectively. In practical
situations it is expected that  nonzero small and finite but large
values of $m$ for urban and open areas, respectively. The moderate
values of $m$ corresponds to suburban and rural areas.

\section{Ergotic capacity of the uplink MU-MIMO system}

In this section, we first derive the expressions of the ergotic capacity
of single user in the uplink MU-MIMO system to clarify the relationship between the
ergodic capacity and the large-scale fading clearly. Furthermore, the average ergotic capacity of all the
users in a cell is also analyzed to
describe the network performance. Since in
massive MIMO systems zero-forcing (ZF) detector tends to be optimal
\cite{Larsson_uplink}, it is adopted here as the beamformig strategy.

\subsection{Ergotic capacity of uplink MU-MIMO with perfect CSI}

\subsubsection{Ergotic capacity of the $k$th user}

With an ideal assumption that the BS has perfect CSI, the lower bound of the
capacity of the $k$th user in this case can be given by \cite[Eq.
(13)]{Larsson_uplink}
\begin{align}\label{capacity ref}
{C_{P,k}^L} = {\log _2}\left( {1 + {\beta _k}{p_u}(M-K)} \right).
\end{align} In the above expression (\ref{capacity ref}), the involved
large scale fading parameter ${\beta _k}$ is a variable, here we
take a further step to analyze the ergodic capacity over ${\beta
_k}$.

\begin{theorem}
Considering large-scale fading, the ergodic
capacity of the $k$th user with perfect CSI is
\begin{align}\label{capacity theorem}
E\left[ {{C_{P,k}^L}} \right] & = \frac{{\Omega _k^{}{{p_u}(M-K)}
{{m_k}  } }}{{\ln \left( 2 \right)D_k^v}} \ {}_3{F_1}\left(
{\begin{array}{*{20}{c}}
{{m_k} + 1,1,1}\\
2
\end{array}; - \frac{{\Omega _k^{}{p_u}(M-K)}}{{D_k^v}}} \right),
\end{align}where ${}_p{F_q}\left( {\begin{array}{*{20}{c}}
{{a_{1,}} \ldots ,{a_p}}\\
{{b_1}, \ldots ,{b_q}}
\end{array};z} \right)$ is the generalized hypergeometric
function.
\end{theorem}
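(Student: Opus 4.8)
The plan is to compute the expectation $E[C_{P,k}^L] = E\big[\log_2(1+\beta_k p_u(M-K))\big]$ directly by integrating against the density of $\beta_k$. Since $\beta_k = \mu_k/D_k^v$ with $\mu_k$ gamma-distributed, a change of variables turns the PDF in (\ref{large-scale 2}) into the density of $\beta_k$, namely $p(\beta_k) = (D_k^v)^{m_k}\beta_k^{m_k-1}\exp(-D_k^v\beta_k/\Omega_k)/(\Gamma(m_k)\Omega_k^{m_k})$ for $\beta_k>0$. Writing $a \triangleq p_u(M-K)$ and $b \triangleq D_k^v/\Omega_k$ for brevity, the quantity to evaluate is
\begin{align}\label{proof-integral}
E[C_{P,k}^L] = \frac{b^{m_k}}{\ln(2)\,\Gamma(m_k)} \int_0^\infty \ln(1+a t)\, t^{m_k-1} e^{-bt}\, dt.
\end{align}

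First I would integrate by parts in (\ref{proof-integral}), differentiating $\ln(1+at)$ and integrating $t^{m_k-1}e^{-bt}$; alternatively — and this is the route I expect to be cleanest — I would use the integral representation $\ln(1+at) = \int_0^1 \frac{at}{1+ast}\,ds$, substitute it into (\ref{proof-integral}), and swap the order of integration. The inner $t$-integral then becomes $\int_0^\infty \frac{t^{m_k}e^{-bt}}{1+ast}\,dt$, which is a standard confluent-type integral expressible via the Tricomi function $U(\cdot,\cdot,\cdot)$ (equivalently via ${}_2F_0$ or an incomplete gamma function); see e.g. Gradshteyn--Ryzhik. The remaining $s$-integral over $[0,1]$ of that expression, after pulling out constants, produces the ${}_3F_1$ in (\ref{capacity theorem}): the parameter $m_k+1$ comes from the power $t^{m_k}$ in the numerator (one higher than $t^{m_k-1}$ because of the factor $t$ from $\ln(1+at)=\int at/(1+ast)\,ds$), the two repeated unit numerator parameters and the single denominator parameter $2$ arise from the $s$-integration of a hypergeometric integrand over $[0,1]$ via Euler's integral, and the argument $-\Omega_k p_u(M-K)/D_k^v = -a/b$ is exactly the ratio of the two constants. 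Matching the prefactor $\Omega_k p_u(M-K)m_k/(\ln(2)D_k^v) = a\,m_k/(\ln(2)\,b)$ against the constants generated along the way, using $\Gamma(m_k+1)=m_k\Gamma(m_k)$, completes the identification.

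The main obstacle is bookkeeping rather than conceptual: keeping the Pochhammer/gamma factors consistent through the double integral and correctly collapsing the resulting nested series into the single ${}_3F_1$ with the stated parameter list. One must be careful about convergence when $|a/b|$ is not small, so I would either restrict to the convergent regime and invoke analytic continuation of ${}_3F_1$, or phrase the final integral in terms of $U(a,b,z)$ (which is entire in $z$ after the appropriate normalization) and only at the last step rewrite $U$ as the terminating-type hypergeometric sum that matches (\ref{capacity theorem}). A useful sanity check is the limit $m_k\to\infty$ with $\Omega_k m_k$ fixed, where $\beta_k$ concentrates at its mean and $E[C_{P,k}^L]$ should reduce to $\log_2(1+E[\beta_k]p_u(M-K))$; verifying that the ${}_3F_1$ expression exhibits this behavior confirms the constants.
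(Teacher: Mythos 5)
Your proposal is correct in substance but follows a genuinely different route from the paper. The paper's Appendix A expands the logarithm as the Maclaurin series $\ln(1+x)=\sum_{i\ge 0}(-1)^i x^{i+1}/(i+1)$, integrates term by term against the gamma density via \cite[3.326.2]{Table_of_Integrals} to produce the factors $\Gamma(m_k+i+1)$, and then rewrites the coefficients with Pochhammer symbols to recognize the ${}_3F_1$. You instead use the representation $\ln(1+at)=\int_0^1 \frac{at}{1+ast}\,ds$, swap the order of integration, and evaluate the inner $t$-integral as a Tricomi-$U$/confluent object before doing the $s$-integration. Carried out formally (expanding $1/(1+ast)$ geometrically and integrating term by term, which yields $\frac{a\Gamma(m_k+1)}{b^{m_k+1}}\sum_{j}\frac{(m_k+1)_j}{j+1}(-a/b)^j$ and then $\frac{1}{j+1}=\frac{(1)_j(1)_j}{(2)_j\,j!}$), your route lands on exactly the same series and the same prefactor after $\Gamma(m_k+1)=m_k\Gamma(m_k)$, so the parameter identification you sketch is right. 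The comparison is instructive precisely on the convergence point you raise: since ${}_3F_1$ has $p=q+2$, the final series has zero radius of convergence and must be read as an asymptotic or resummed (e.g.\ Meijer-$G$/Borel) representation; the paper's proof already commits the corresponding formal step at the outset, applying the $\ln(1+x)$ series (valid only for $|x|\le 1$) under an integral over all of $(0,\infty)$, whereas your route keeps every intermediate object a convergent integral and defers the formal step to the final identification of the $s$-integral of $U$ with the divergent hypergeometric series. That is a modest but real gain in care, bought at the price of heavier special-function bookkeeping; the paper's route is shorter and purely mechanical. Your $m_k\to\infty$ sanity check on the constants is a useful addition that the paper does not perform.
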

\begin{proof}
See Appendix A.
\end{proof}

\subsubsection{Average ergotic capacity of all the users}
In the considered communication systems, the coverage area of a cell is model as a disc and the BS is located in
the center of the cell. The mobile users are located uniformly in the
cell with $R_0 < D_k <R$. The distribution of the users along the
radius of the cell is expressed as
\begin{align}\label{circle average 0}
f_d\left( x \right) = \frac{{2K}}{{{R^2-R_0^2}}}x.
\end{align}where $x$ is the distance between one user and the BS.

Exploiting Eqs. (\ref{capacity theorem}) and (\ref{circle average
0}), the lower bound of the average capacity of all the users in the
cell, noting as $\bar C_P^L$, can be averaged after deriving the
integral over the radius as
\begin{align}\label{circle average 1}
\bar C_P^L &= \frac{1}{K}\sum\limits_{k = 1}^K {E\left[ {C_{P,k}^L}
\right]} \notag \\
&\mathop  \to \limits^{a.s.} \frac{2}{{{R^2-R_0^2}}}\int_{{R_0}}^R
{x\frac{{\Omega _k^{}{p_u}(M - K) {{m_k} } }}{{\ln \left( 2
\right){x^v}}} \times {\;_3}{F_1}\left( {\begin{array}{*{20}{c}}
{{m_k} + 1,1,1}\\
2
\end{array}; - \frac{{\Omega _k^{}{p_u}(M - K)}}{{{x^v}}}} \right)}
dx.
\end{align}

Therefore the integral in the above expression (\ref{circle average
1}) becomes to the concern of the following work.

\begin{theorem}
Considering the distribution of large-scale fading, the average
ergodic capacity of all the users with perfect CSI is
\begin{align}\label{capacity average theorem}
\bar C_P^L &\mathop  \to \limits^{a.s.} \frac{{2\Omega _k^{}{p_u}(M
- K){m_k}}}{{\ln \left( 2 \right)\left( {{R^2} - R_0^2} \right)
\left( v - 2 \right)}}  \Bigg\{ \frac{1}{{{R_0}^{v -
2}}}{}_4{F_2}\left( {\begin{array}{*{20}{c}}
{\frac{{v - 2}}{v},{m_k} + 1,1,1}\\
{\frac{{v - 2}}{v} + 1,2}
\end{array};\frac{{ - \Omega _k^{}{p_u}(M - K)}}{{{R_0}^v}}} \right) \notag\\
& \quad - \frac{1}{{{R^{v - 2}}}}{}_4{F_2}\left(
{\begin{array}{*{20}{c}}
{\frac{{v - 2}}{v},{m_k} + 1,1,1}\\
{\frac{{v - 2}}{v} + 1,2}
\end{array};\frac{{ - \Omega _k^{}{p_u}(M - K)}}{{{R^v}}}}
\right)\Bigg\}.
\end{align}
\end{theorem}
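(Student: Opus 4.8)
The plan is to take the single integral in (\ref{circle average 1}) as the starting point and evaluate it in closed form by a term-by-term expansion of the generalized hypergeometric function. Abbreviating $a = \Omega_k p_u(M-K)$ and pulling the constants out in front, everything reduces to computing
\[
I \;=\; \int_{R_0}^{R} x^{1-v}\;{}_3F_1\!\left(\begin{array}{c} m_k+1,\,1,\,1\\ 2\end{array};\,-\frac{a}{x^{v}}\right)dx ,
\]
after which (\ref{capacity average theorem}) follows by multiplying by $\dfrac{2\,a\,m_k}{\ln(2)\,(R^2-R_0^2)}$.

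First I would substitute the series definition ${}_3F_1(m_k+1,1,1;2;z)=\sum_{n\ge0}\dfrac{(m_k+1)_n (1)_n (1)_n}{(2)_n\,n!}\,z^{n}=\sum_{n\ge0}\dfrac{(m_k+1)_n (1)_n}{(2)_n}\,z^{n}$ (one factor $(1)_n$ cancels the $n!$), set $z=-a x^{-v}$, and interchange summation with integration --- a step performed at the same formal level as in the derivation of Theorem~1 in Appendix~A. Each term then leaves the elementary integral $\int_{R_0}^{R} x^{1-v-vn}\,dx = \dfrac{R_0^{\,2-v-vn}-R^{\,2-v-vn}}{v-2+vn}$, which is valid for $v>2$; the borderline value $v=2$ is recovered as a limit, since $\lim_{v\to 2}\dfrac{R_0^{\,2-v}-R^{\,2-v}}{v-2}=\ln(R/R_0)$.

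The key manipulation is to write $v-2+vn = v\bigl(n+\tfrac{v-2}{v}\bigr)$ and apply the Pochhammer identity $\dfrac{1}{n+c}=\dfrac{(c)_n}{c\,(c+1)_n}$ with $c=\tfrac{v-2}{v}$, so that $\dfrac{1}{v-2+vn}=\dfrac{1}{v-2}\cdot\dfrac{(c)_n}{(c+1)_n}$. Substituting this back, factoring $R_0^{\,2-v}(R_0^{-v})^{n}$ (respectively $R^{\,2-v}(R^{-v})^{n}$) out of the $n$th term, and re-inserting the missing $n!$ through $(1)_n=n!$, each of the two resulting sums becomes precisely a ${}_4F_2$ with upper parameters $\tfrac{v-2}{v},\,m_k+1,\,1,\,1$ and lower parameters $\tfrac{v-2}{v}+1,\,2$, evaluated at $-a/R_0^{v}$ and at $-a/R^{v}$, respectively; writing $R_0^{\,2-v}=1/R_0^{\,v-2}$ and $R^{\,2-v}=1/R^{\,v-2}$ and restoring the prefactor yields the claimed expression. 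The only genuine obstacle is bookkeeping --- tracking the Pochhammer symbols through the cancellation and re-introduction of $n!$, and ensuring that the extra parameter $\tfrac{v-2}{v}$ shows up in both the numerator and denominator lists (shifted by one). The analytic content (the interchange of sum and integral, with the resulting series understood in the same formal/asymptotic sense as those in Theorem~1) needs no new idea, and the condition $v\neq 2$ is the sole caveat.
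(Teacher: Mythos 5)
Your proposal is correct and follows essentially the same route as the paper's Appendix B: expand the hypergeometric factor as its defining series (the paper substitutes its Eq.~(21), which is exactly that series), integrate term by term via $\int x^{a}\,dx = x^{a+1}/(a+1)$, absorb the resulting $1/(v-2+vn)$ factor into the Pochhammer ratio $(\tfrac{v-2}{v})_n/(\tfrac{v-2}{v}+1)_n$, and reassemble the two sums as ${}_4F_2$ functions. Your explicit remarks on the $v>2$ requirement and the formal interchange of sum and integral are minor additions the paper omits, not a different argument.
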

\begin{proof}
See Appendix B.
\end{proof}

Perfect CSI is only an ideal assumption. In practice, CSI should be estimated via training sequences or pilots.  Due to limited length of training sequences and
time varying nature of wireless channels, channel estimation errors
are inevitable. In the following section, the performance with imperfect CSI is analyzed.

\subsection{Ergotic capacity of uplink MU-MIMO with imperfect CSI}

In this subsection we take a further step to derive several useful
expressions of the ergodic capacity of the massive MIMO systems with
imperfect CSI. In practice, the channel matrix $\bf{G}$ need to be
estimated  with the assistance of pilots \cite{gaofeifei_training}.
During the training phase in the coherence interval, mutually
orthogonal pilot signals of length $\tau$ are transmitted by
different users. The pilot sequences used by all the users are
denoted as a $\tau \times K$ matrix $\sqrt{\tau p_u} {\bf \Phi}$
($\tau \ge K$) with ${\bf \Phi}^{\rm H} {\bf \Phi}={\bf I}_K$.
Denoting ${\bf G}$ as the $M \times K$ channel matrix between the BS
and the $K$ users, the $M \times \tau$ received pilot matrix at the
BS is expressed as
\begin{align}\label{pilot 1}
{\bf Y}_p= \sqrt{\tau p_u} {\bf G} {\bf \Phi}^{\rm T} + {\bf N},
\end{align}where ${\bf N}$ is the $M \times \tau$ noise matrix with i.i.d. $CN (0, 1)$
elements. The MMSE estimate of $ {\bf G}$ is
\begin{align}\label{pilot 1}
{\bf{\hat G}} = \frac{1}{{\sqrt {\tau {p_u}} }}{{\bf{Y}}_p}{\bf{\Phi
\tilde D}} = \left( {{\bf{G}} + \frac{1}{{\sqrt {\tau {p_u}}
}}{\bf{W}}} \right){\bf{\tilde D}},
\end{align}where ${\bf{W}} \buildrel \Delta \over = {\bf N
\Phi^{*}}$ has i.i.d. $CN (0, 1)$ elements, as ${\bf \Phi}^{\rm H}
{\bf \Phi}={\bf I}_K$. Since ${\bf{D}} = diag\left\{ {{\beta _i}}
\right\},i \in \left[ {1,K} \right]$ denotes the favorable
propagation, we have matrix ${\bf{\tilde D}} \buildrel \Delta \over
= {\left( {\frac{1}{{\tau {p_u}}}{{\bf{D}}^{{\bf{ - 1}}}} +
{{\bf{I}}_K}} \right)^{ - 1}} = diag \left\{ {\frac{{\tau
{p_u}{\beta _i}}}{{\tau {p_u}{\beta _i} + 1}}} \right\}$, $i \in
\left[ {1,K} \right]$. Denoting the channel estimation error as
${{\bf{G}}_\Delta } = {\bf{\hat G}} - {\bf{G}}$, it can be seen that
the elements of the $i$th column of ${{\bf{G}}_\Delta } $ are RVs
with zero means and variances ${\frac{{\tau {p_u}{\beta _i}}}{{\tau
{p_u}{\beta _i} + 1}}}$, which will decrease the received signal to
noise ratio (SNR) at the BS.

\subsubsection{Ergotic capacity of the $k$th user}
With imperfect CSI, we begin with the derivation of the
ergodic capacity of the $k$th user by employing a lower bound given as
\cite[Eq. (42)]{Larsson_uplink}
\begin{align}\label{capacity ref im 1}
{C_{IP,k}} \ge {\log _2}\left( {1 + \frac{{\tau p_u^2\left( {M - K}
\right)\beta _k^2}}{{\left( {\tau p_u^{}\beta _k^{} + 1}
\right)\sum\limits_{i = 1}^K {\frac{{p_u^{}\beta _i^{}}}{{\tau
p_u^{}\beta _i^{} + 1}}}  + \tau p_u^{}\beta _k^{} + 1}}} \right).
\end{align}

However, the bound given in \cite[Eq. (42)]{Larsson_uplink} is
complex and not possible to calculate the integral over ${\beta
_k}$. Using the fact that $\frac{{p_u^{}\beta _i^{}}}{{\tau
p_u^{}\beta _i^{} + 1}} \le \frac{1}{\tau }$ and $1 + \frac{a}{b}
\ge \frac{{1 + a}}{b},a \ge 0,b \ge 1$, a novel lower bound is
proposed as
\begin{align}\label{capacity ref im}
{C_{IP,k}}
 \ge {\log _2}\left( {1 + \tau p_u^2\left( {M - K} \right)\beta _k^2} \right) - {\log _2}\left( {1 + \tau p_u^{}\beta _k^{}} \right) - {\log _2}\left( {1 + \frac{K}{\tau }} \right) \buildrel \Delta \over =
 C_{IP,k}^L.
\end{align}

Now we derive the integral over the large-scale fading ${\beta _k}$
in the above expression (\ref{capacity ref im}) and obtain the
following theorem.
\begin{theorem}
Taking the large-scale fading into account, the ergodic
capacity of the $k$th user with imperfect CSI is
\begin{align}\label{capacity theorem im}
&E\left[ {C_{IP,k}^L} \right]\notag\\& = \underbrace {\frac{{\tau
\Omega _k^2p_u^2\left( {M - K} \right){2^{{m_k} + 1}}\Gamma \left(
{\frac{{{m_k}}}{2} + 1} \right)\Gamma \left( {\frac{{{m_k}}}{2} +
\frac{3}{2}} \right)}}{{\sqrt \pi  \ln \left( 2 \right)\Gamma \left(
{{m_k}} \right)D_k^{2v}}}{\;_4}{F_1}\left( {\begin{array}{*{20}{c}}
{\frac{{{m_k}}}{2} + 1,\frac{{{m_k}}}{2} + \frac{3}{2},1,1}\\
2
\end{array}; - \frac{{4\tau \Omega _k^2p_u^2\left( {M - K} \right)}}{{D_k^{2v}}}} \right)}_{{{{\Xi _{\rm{1}}}\left( {{D_k},{m_k},{p_u},v} \right)}}}\notag\\
&\quad - \underbrace {\frac{{\tau \Omega _k^{}{p_u}{m_k}}}{{\ln
\left( 2 \right)D_k^v}}{\;_3}{F_1}\left( {\begin{array}{*{20}{c}}
{{m_k} + 1,1,1}\\
2
\end{array}; - \frac{{\tau \Omega _k^{}{p_u}}}{{D_k^v}}} \right)}_{{{{\Xi _{\rm{2}}}\left( {{D_k},{m_k},{p_u},v} \right)}}} - {\log _2}\left( {1 + \frac{K}{\tau }} \right).
\end{align}

\end{theorem}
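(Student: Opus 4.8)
The plan is to pass the expectation through the three terms of the lower bound $C_{IP,k}^L$ in~(\ref{capacity ref im}). By linearity of expectation,
\[
E\!\left[C_{IP,k}^L\right]=E\!\left[\log_2\!\left(1+\tau p_u^2(M-K)\beta_k^2\right)\right]-E\!\left[\log_2\!\left(1+\tau p_u\beta_k\right)\right]-\log_2\!\left(1+\tfrac{K}{\tau}\right),
\]
the last term being deterministic. The middle term has precisely the functional form already evaluated in Theorem~1, i.e. $E[\log_2(1+a\beta_k)]$ with $a=\tau p_u$ replacing $a=p_u(M-K)$ in~(\ref{capacity theorem}); it therefore equals $\Xi_2(D_k,m_k,p_u,v)$ with no further computation. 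So the whole problem reduces to the first term, which is genuinely new because it involves $\beta_k^2$.

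For that term I would write $\beta_k=\mu_k/D_k^v$, turning it into $\tfrac{1}{\ln 2}\,E[\ln(1+c\mu_k^2)]$ with $c=\tau p_u^2(M-K)/D_k^{2v}$ and $\mu_k$ gamma-distributed as in~(\ref{large-scale 2}). Expanding $\ln(1+c\mu_k^2)=\sum_{j\ge1}\frac{(-1)^{j+1}}{j}c^j\mu_k^{2j}$ and integrating term by term with the moment identity $E[\mu_k^{2j}]=\Omega_k^{2j}\Gamma(m_k+2j)/\Gamma(m_k)$, the key algebraic step is the Pochhammer duplication $(m_k)_{2j}=4^{j}(m_k/2)_j\big((m_k+1)/2\big)_j$, which splits the single factorial ratio into a product of two rising factorials. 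Shifting $j\mapsto n+1$ and using $1/(n+1)=(1)_n/(2)_n$ then recasts the series as the generalized hypergeometric ${}_4F_1$ with upper parameters $m_k/2+1,\ m_k/2+3/2,\ 1,\ 1$, lower parameter $2$, and argument $-4\tau\Omega_k^2p_u^2(M-K)/D_k^{2v}$, carrying a prefactor $m_k(m_k+1)=\Gamma(m_k+2)/\Gamma(m_k)$. Finally the Legendre duplication formula $\Gamma(z)\Gamma(z+\tfrac12)=2^{1-2z}\sqrt{\pi}\,\Gamma(2z)$ at $z=m_k/2+1$ rewrites that prefactor as $2^{m_k+1}\Gamma(m_k/2+1)\Gamma(m_k/2+3/2)/(\sqrt{\pi}\,\Gamma(m_k))$, which is exactly the constant in $\Xi_1$. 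Collecting the three contributions yields~(\ref{capacity theorem im}).

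I expect the main obstacle to be justifying the interchange of summation and integration: the Taylor series of $\ln(1+x)$ converges only for $|x|<1$, whereas $\mu_k$ ranges over $(0,\infty)$, so the term-by-term integration is a priori only formal. To make it rigorous I would either replace $\ln(1+x)$ by its Meijer $G$-function representation and invoke the standard Laplace/Mellin--Barnes integral $\int_0^\infty e^{-t/\Omega_k}t^{m_k-1}G(\cdot)\,dt$, which reproduces the ${}_4F_1$ directly and is valid for the whole parameter range, or argue by analytic continuation: the resulting ${}_4F_1$ series defines a function agreeing with the integral on an interval of the argument and hence everywhere. The remaining ingredients (the gamma moments, the two duplication identities, and matching the overall constants) are routine.
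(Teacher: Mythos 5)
Your proposal follows essentially the same route as the paper's Appendix C: split the bound into its three terms, reduce the middle term to the Theorem~1 computation with $p_u(M-K)$ replaced by $\tau p_u$, and for the $\beta_k^2$ term expand the logarithm, integrate term by term against the gamma density, split $\Gamma(m_k+2i+2)$ via the Legendre/Pochhammer duplication formula, and repackage the series as the stated ${}_4F_1$. The only difference is that you explicitly flag the formal nature of the term-by-term integration (and, implicitly, of the resulting ${}_p F_q$ with $p>q+1$, which converges for no nonzero argument and must be read through the Meijer $G$ / Mellin--Barnes representation); the paper performs the same interchange without comment, so your remark is a point in your favor rather than a deviation.
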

\begin{proof}
See Appendix C.
\end{proof}

\subsubsection{Average ergotic capacity of all the users}
With the aid of Theorem 3, we are capable to derive the average
ergodic capacity of all the user with imperfect CSI, which can be
derived following the same logic of the perfect CSI case as
\begin{align}\label{average im}
\bar C_{IP}^L\mathop  \to \limits^{a.s.} \frac{2}{{{R^2} -
R_0^2}}\int_{R_0^2}^R x {\Xi _{\rm{1}}}\left( {x,{m_k},{p_u},v}
\right)dx - \frac{2}{{{R^2} - R_0^2}}\int_{R_0^2}^R x {\Xi _2}\left(
{x,{m_k},{p_u},v} \right)dx - {\log _2}\left( {1 + \frac{K}{\tau }}
\right).
\end{align}
Deriving the integral of Eq. (\ref{average im}) yields the following
theorem.
\begin{theorem}
Taking the effects of the large-scale fading into account, the
average ergodic capacity of all the users with imperfect CSI is
\begin{align}\label{capacity theorem im}
{{\bar C}_{IP}}&\mathop  \to \limits^{a.s.} \frac{{\tau \Omega _k^2p_u^2\left( {M - K} \right){2^{{m_k}}}\Gamma \left( {\frac{{{m_k}}}{2} + 1} \right)\Gamma \left( {\frac{{{m_k}}}{2} + \frac{3}{2}} \right)}}{{\left( {{R^2} - R_0^2} \right)\sqrt \pi  \ln \left( 2 \right)\Gamma \left( {{m_k}} \right)\left( v - 1\right)}}\notag \\
&\quad\times\Bigg\{ \frac{1}{{{R_0}^{2v - 2}}}{}_5{F_2}\left(
{\begin{array}{*{20}{c}}
{\frac{{v - 1}}{v},\frac{{{m_k}}}{2} + 1,\frac{{{m_k}}}{2} + \frac{3}{2},1,1}\\
{\frac{{v - 1}}{v} + 1,2}
\end{array};\frac{{ - 4\tau \Omega _k^2p_u^2\left( {M - K} \right)}}{{{R_0}^{2v}}}} \right)\notag\\
&\quad \quad \quad\quad- \frac{1}{{{R^{2v - 2}}}}{}_5{F_2}\left(
{\begin{array}{*{20}{c}}
{\frac{{v - 1}}{v},\frac{{{m_k}}}{2} + 1,\frac{{{m_k}}}{2} +  \frac{3}{2},1,1}\\
{\frac{{v - 1}}{v} + 1,2}
\end{array};\frac{{ - 4\tau \Omega _k^2p_u^2\left( {M - K} \right)}}{{{R^{2v}}}}} \right)\Bigg\}\notag \\
& \quad - \frac{{2\tau \Omega _k^{}{p_u}{m_k}}}{{\ln \left( 2
\right)\left( {{R^2} - R_0^2} \right)\left(v-2\right)}} \Bigg\{
\frac{1}{{{R_0}^{v - 2}}}{}_4{F_2}\left( {\begin{array}{*{20}{c}}
{\frac{{v - 2}}{v},{m_k} + 1,1,1}\\
{\frac{{v - 2}}{v} + 1,2}
\end{array};\frac{{ - \tau \Omega _k^{}{p_u}}}{{{R_0}^v}}} \right)\notag\\
& \quad \quad\quad\quad- \frac{1}{{{R^{v - 2}}}}{}_4{F_2}\left(
{\begin{array}{*{20}{c}}
{\frac{{v - 2}}{v},{m_k} + 1,1,1}\\
{\frac{{v - 2}}{v} + 1,2}
\end{array};\frac{{ - \tau \Omega _k^{}{p_u}}}{{{R^v}}}} \right)\Bigg\} - {\log _2}\left( {1 + \frac{K}{\tau }}
\right).
\end{align}

\end{theorem}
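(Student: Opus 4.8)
The plan is to derive this theorem directly from Theorem~3 by carrying out the spatial average indicated in Eq.~(\ref{average im}). Theorem~3 already expresses the per-user rate as $\Xi_1-\Xi_2-\log_2(1+K/\tau)$, with $\Xi_1$ and $\Xi_2$ given in closed form as functions of the distance, so no further averaging over the fading is required; what remains is to integrate $\Xi_1(x,m_k,p_u,v)$ and $\Xi_2(x,m_k,p_u,v)$ against the radial user density $f_d(x)=\frac{2K}{R^2-R_0^2}x$ and divide by $K$. The constant $-\log_2(1+K/\tau)$ is untouched by this average, since $\frac{2}{R^2-R_0^2}\int_{R_0}^{R}x\,dx=1$.

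I would dispose of the $\Xi_2$-term first, because it needs no new calculation. The function $\Xi_2(x,m_k,p_u,v)=\frac{\tau\Omega_k p_u m_k}{\ln(2)\,x^{v}}\,{}_3F_1\!\left(m_k+1,1,1;2;-\tau\Omega_k p_u/x^{v}\right)$ is, term for term, the integrand of Eq.~(\ref{circle average 1}) under the substitution $p_u(M-K)\mapsto \tau p_u$. Hence $\frac{2}{R^2-R_0^2}\int_{R_0}^{R}x\,\Xi_2\,dx$ equals the right-hand side of Eq.~(\ref{capacity average theorem}) with that same substitution, which reproduces verbatim the second (subtracted) brace of the claimed expression, complete with its $1/(v-2)$ factor and the two ${}_4F_2$ evaluations at $-\tau\Omega_k p_u/R_0^{v}$ and $-\tau\Omega_k p_u/R^{v}$.

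For the $\Xi_1$-term I would expand the generalized hypergeometric function as its series in $x^{-2v}$, namely $\sum_{n\ge 0}\frac{(\frac{m_k}{2}+1)_n(\frac{m_k}{2}+\frac{3}{2})_n(1)_n(1)_n}{(2)_n\,n!}\left(-c/x^{2v}\right)^{n}$ with $c=4\tau\Omega_k^2 p_u^2(M-K)$, interchange summation and integration, and evaluate the elementary integrals $\int_{R_0}^{R}x^{\,1-2v-2vn}\,dx=\frac{1}{2(v-1+vn)}\big(R_0^{-(2v-2)}R_0^{-2vn}-R^{-(2v-2)}R^{-2vn}\big)$, which are finite on $[R_0,R]$ (the binding assumption being $v>2$, inherited from the $\Xi_2$-term). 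The decisive step is to reabsorb the $n$-dependent denominator through the Pochhammer identity $\frac{1}{2(v-1+vn)}=\frac{1}{2(v-1)}\cdot\frac{(\frac{v-1}{v})_n}{(\frac{v-1}{v}+1)_n}$: this supplies exactly one new numerator parameter $\frac{v-1}{v}$ and one new denominator parameter $\frac{v-1}{v}+1$, promoting the ${}_4F_1$ to the ${}_5F_2$ appearing in the statement, evaluated at $-c/R_0^{2v}$ and $-c/R^{2v}$. Multiplying by the prefactor of $\Xi_1$ and by $\frac{2}{R^2-R_0^2}$ turns $2^{m_k+1}\cdot\frac{1}{2}$ into $2^{m_k}$ and produces the $1/(v-1)$ factor, giving the first brace; assembling the two braces with the logarithmic term finishes the proof.

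The main obstacle is the $\Xi_1$-integral: one must carefully track the doubled path-loss exponent $2v$ — which is what replaces the ${}_4F_2$ and the parameter $\frac{v-2}{v}$ of the perfect-CSI case by a ${}_5F_2$ and the parameter $\frac{v-1}{v}$ — while keeping the Gamma-function and power-of-two constants aligned. The term-by-term integration is a formal manipulation of a (divergent) hypergeometric series, consistent with the style of Appendices~A--C; if a rigorous justification is desired, one may instead start from a Mellin--Barnes/Meijer-$G$ representation of $\Xi_1$ and integrate that, but I would keep the presentation at the series level for uniformity with the earlier proofs.
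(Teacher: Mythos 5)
Your strategy is exactly the one the paper intends (the paper omits this proof, saying only that it follows Appendices B and C): dispose of the $\Xi_2$-integral by recognizing it as the perfect-CSI integral of Eq.~(\ref{circle average 1}) under the substitution $p_u(M-K)\mapsto\tau p_u$, and handle the $\Xi_1$-integral by expanding the hypergeometric series in $x^{-2v}$, integrating term by term, and absorbing the resulting $1/(v-1+vn)$ through $\frac{1}{v-1+vn}=\frac{1}{v-1}\cdot\frac{(\frac{v-1}{v})_n}{(\frac{v-1}{v}+1)_n}$, which promotes the ${}_4F_1$ to the ${}_5F_2$. The $\Xi_2$ half and the structure of the $\Xi_1$ half are both right.

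The problem is the constant-tracking in your last step, which is internally inconsistent. From $\int_{R_0}^{R}x^{1-2v-2vn}\,dx$ you correctly obtain the factor $\frac{1}{2(v-1+vn)}$, but you then spend its $\tfrac12$ twice: once to cancel the $2$ in the density $\frac{2}{R^2-R_0^2}$ and once to turn $2^{m_k+1}$ into $2^{m_k}$. Done once, the net coefficient is $\frac{2}{R^2-R_0^2}\cdot 2^{m_k+1}\cdot\frac{1}{2(v-1)}=\frac{2^{m_k+1}}{(R^2-R_0^2)(v-1)}$, so a careful run of your own method yields $2^{m_k+1}$ where the printed theorem has $2^{m_k}$. (Contrast with the perfect-CSI case: there the exponent step is $v$, not $2v$, so no $\tfrac12$ appears and the density's factor of $2$ survives into Theorem 2.) You therefore either need to find a genuine additional $\tfrac12$, or conclude that the first term of the stated theorem carries a factor-of-two discrepancy relative to Theorem 3 and Eq.~(\ref{average im}); as written, your proof simply lands on the printed constant by double-counting. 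A minor further point: the $\Xi_1$ denominators $v-1+vn$ never vanish for $v>1$, so the only binding restriction is $v\neq 2$ from the $\Xi_2$ term, not $v>2$ as you state.
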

\begin{proof}
Due to space limitations, we will skip the proof, which can be
obtained following the similar logics of Appendices B and C.
\end{proof}

We note that all the expressions in the theorems of this section are
novel and can be used to predict the ergodic capacity of the large
uplink MU-MIMO systems over generalized-$K$ fading channels. It is
also worth noting that they are closed-form expressions, involving
only a finite number of basic operations such as summations of
logarithms, Gamma functions, and generalized hypergeometric
function, etc. This result can be used to determine the radius of the coverage of massive MIMO base stations.

\section{Numerical results}

In this section, simulation results are presented to
examine the impact of network parameters on the ergodic capacity of
the large uplink MU-MIMO systems. We assume that the users are
located uniformly in a cell of $1000$ meters while no user is closer
to the BS than $100$ meters, which means that $R=1000$ and
$R_0=100$. The distribution of the users is plotted in Fig.
\ref{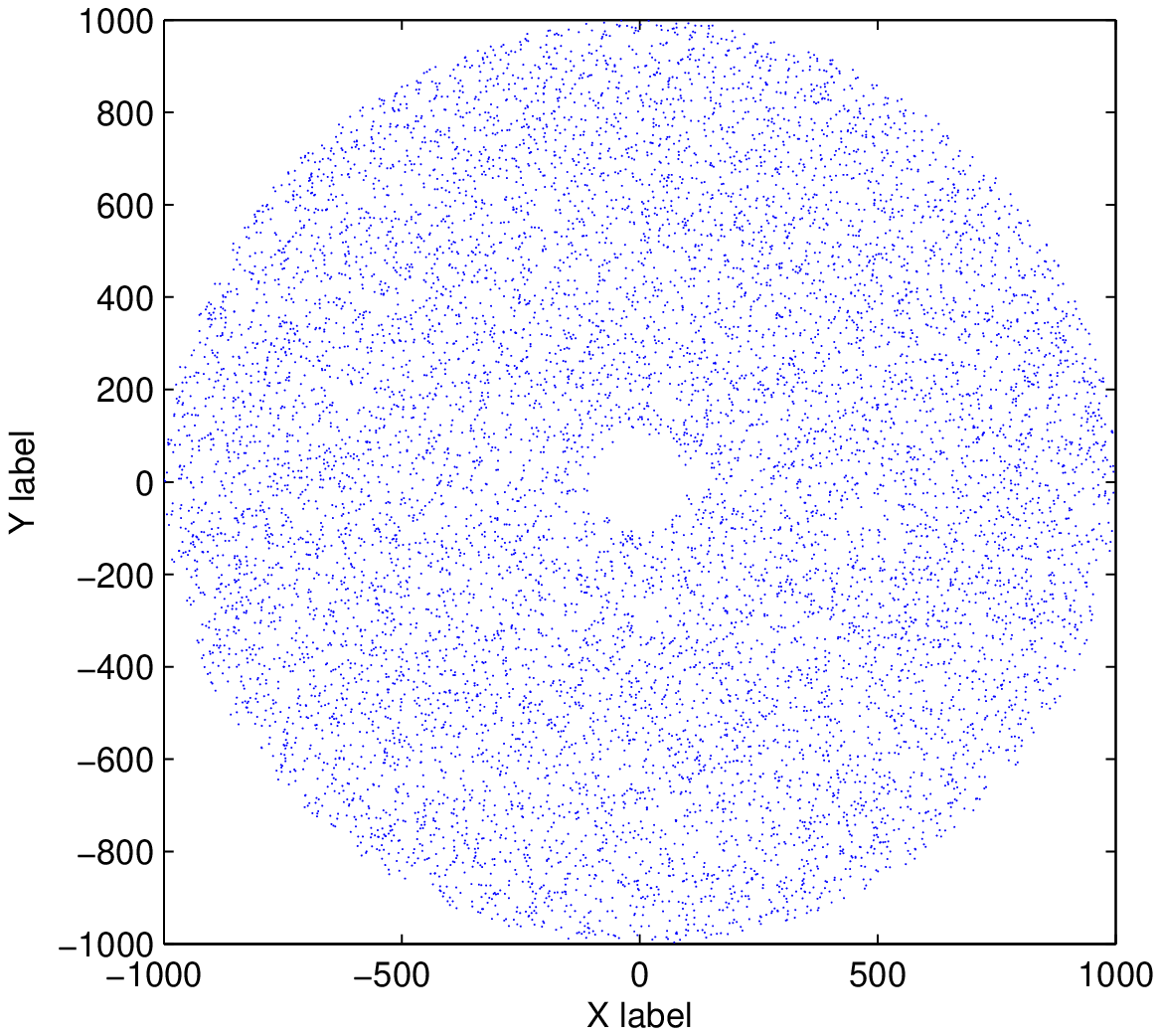}. The transmitted signals suffer from the
Nakagami-$m$ distributed fast fading (we set the fast fading
parameter $m=1$), and the gamma distributed large-scale fading (we
set $\Omega _k= 1/{m_k}$ and $m_1=m_2= \ldots =m_K =m$). We also
assume equal transmit power at each node and the value of $P$ is set
to be the received power at $500$ meters far from the transmitted
users. The length of the pilot is set as $\tau = K$. Furthermore,
the path loss exponent is set as $v = 3.6$ \cite{D_Tse}.

Fig. \ref{Capacity_different_N.eps} plots both the simulated and
numerical average capacity of all the users with different number of
the BS antennas $N$. The number of the users is set as $K=9$ and the
large-scale fading parameter is set as $m=3.3$. It can be seen that
the numerical curves accurately predict the simulated ones in the
large number of $N$ with both perfect and imperfect CSI. By
observing these curves, it is evident that the capacity increases
with $N$, which indicates that increasing the number of the BS
antennas brings an improved performance of system throughput. For
example, increasing $N$ from $100$ to $300$ brings a capacity
advantage of almost $20\%$ at $P=10dB$ with perfect CSI. It is also evident
that the lack of CSI will decrease the capacity of the system. For
example, the imperfect CSI brings about a capacity loss of $12\%$ at
$P=10dB$ with $N=250$.

Figs. \ref{Capacity_different_naka_m_antennas.eps} and
\ref{Capacity_different_naka_m_power.eps} plot the average capacity
of all the users with different large-scale fading parameters $m$.
We set $K=9$. It can be observed that increasing $m$ will result in
a better average capacity of all the users in both figures. For
example, increasing $m$ from $0.1$ to $1$ brings about a capacity
advantage of $93\%$ at $N=128, P=20dB$ with perfect CSI. Increasing
$m$ from $2$ to $5$ brings about a capacity advantage of $2\%$ at
$N=128, P=20dB$ with perfect CSI. Similar observations can be
obtained by the curves with imperfect CSI. These observations
demonstrate that with both perfect CSI and imperfect CSI, the
average ergodic capacity of all the users increases obversely as $m$
goes large when $m<2$, which is the urban scenario with heavy
shadowing; larger $m$ will brings little improvement of the capacity
when $m>2$, which is the suburban, rural, or flatland scenario.

\section{Conclusions}
In this correspondence, we analyzed the ergodic capacity of a large
uplink MU-MIMO system where multiple single-antenna users transmit
their information to a base station equipped with a very large number
of antennas. Over generalized-$K$ fading channels, several novel
lower bounds of the ergodic capacity of a single user were derived
for both perfect CSI and imperfect CSI, as well as the average
ergodic capacity of all the users in the cell. Simulation results
were used to validate our analytical expressions. We figured out
that large-scale fading will have large effect on the capacity of
the system with both perfect and imperfect CSI in the urban
scenario. Our result is of importance to design the coverage of massive MIMO base stations.

\appendices

\section{Proof of Theorem 1} Substituting Eqs. (\ref{large-scale}) and
(\ref{large-scale 2}) into Eq. (\ref{capacity ref}), the ergodic
capacity of the $k$th user can be expressed as
\begin{align}\label{perfect proof 1}
E\left[ {{C_{P,k}^L}} \right] = &\int_0^\infty  {{{\log }_2}\left(
{1 + \frac{{{p_u}(M - K)}}{{D_k^v}}{\mu _k}} \right)} \frac{{{\mu
_k}^{{m_k} - 1}}}{{\Gamma \left( {{m_k}} \right){\Omega
_k}^{{m_k}}}}\exp \left( { - \frac{{{\mu _k}}}{{{\Omega _k}}}}
\right)d{\mu _k}.
\end{align}Employing ${\log _2}\left( x \right) = \ln \left( x \right)/\ln
\left( 2 \right)$ and $\ln \left( {1 + x} \right) = \sum\limits_{i =
0}^\infty  {{{\left( { - 1} \right)}^{i }}{x^{i+1}}/({i+1})}$
\cite[1.511]{Table_of_Integrals}, we can rewrite the above
expression (\ref{perfect proof 1}) as
\begin{align}\label{perfect proof 2}
E\left[ {C_{P,k}^L} \right] =  - \frac{1}{{\ln \left( 2
\right)\Gamma \left( {{m_k}} \right){\Omega
_k}^{{m_k}}}}\sum\limits_{i = 0}^\infty  {\frac{1}{{i + 1}}} {\left(
{ - \frac{{{{{p_u}(M - K)}}}}{{D_k^v}}} \right)^{i + 1}}
\int_0^\infty  {{\mu _k}^{{m_k} + i}\exp \left( { - \frac{{{\mu
_k}}}{{{\Omega _k}}}} \right)d} {\mu _k}.
\end{align}

Using the formula given by \cite[3.326.2]{Table_of_Integrals}, the
above expression (\ref{perfect proof 2}) can be expressed as
\begin{align}\label{perfect proof 3}
E\left[ {C_{P,k}^L} \right] &=  - \frac{1}{{\ln \left( 2
\right)\Gamma \left( {{m_k}} \right)}}\sum\limits_{i = 0}^\infty
{\frac{1}{{i + 1}}} \Gamma \left( {{m_k} + i + 1} \right)  {\left( {
- \frac{{\Omega _k^{}{p_u}(M - K)}}{{D_k^v}}} \right)^{i + 1}}.
\end{align}Multiplying the same items on both the numerators and denominators in the
fractions does not change the equality and thus Eq. (\ref{perfect proof 3}) can be rewritten as
\begin{align}\label{perfect proof 4}
E\left[ {C_{P,k}^L} \right] &= \frac{{\Omega _k^{}{p_u}(M - K)\Gamma \left( {{m_k} + 1} \right)}}{{\ln \left( 2 \right)\Gamma \left( {{m_k}} \right)D_k^v}}\sum\limits_{i = 0}^\infty  {\frac{{\Gamma \left( {{m_k} + 1 + i} \right)}}{{\Gamma \left( {{m_k} + 1} \right)}}} \frac{{\Gamma \left( {i + 1} \right)\Gamma \left( {i + 1} \right)\Gamma \left( 2 \right)}}{{\Gamma \left( 1 \right)\Gamma \left( 1 \right)\Gamma \left( {i + 2} \right)}}\frac{{{{\left( { - \frac{{\Omega _k^{}{p_u}(M - K)}}{{D_k^v}}} \right)}^i}}}{{\Gamma \left( {i + 1} \right)}}\notag \\
 &= \frac{{\Omega _k^{}{p_u}(M - K)\Gamma \left( {{m_k} + 1} \right)}}{{\ln \left( 2 \right)\Gamma \left( {{m_k}} \right)D_k^v}}\sum\limits_{i = 0}^\infty  {} \frac{{{{\left( {{m_k} + 1}
\right)}_i}{{\left( 1 \right)}_i}{{\left( 1 \right)}_i}}}{{{{\left(
2 \right)}_i}}}\frac{{{{\left( { - \frac{{\Omega _k^{}{p_u}(M -
K)}}{{D_k^v}}} \right)}^i}}}{{i!}},
\end{align}where ${\left( m
\right)_i} = \Gamma \left( {m + i} \right)/\Gamma \left( m \right)$
is named as the Pochmann symbol. Based on
\cite[16.2.1]{NIST_handbook_of_mathematical_functions}, Eq.
(\ref{perfect proof 4}) can be reformulated as Eq. (\ref{capacity
theorem}) and then the theorem can be achieved.

\section{Proof of Theorem 2}
Substituting Eq. (\ref{perfect proof 3}) into Eq. (\ref{circle
average 1}) and using the formula $\int {{x^a}} dx = {x^{a +
1}}/\left( {a + 1} \right)$, we can reformulate (\ref{circle average
1}) as
\begin{align}\label{average perfect proof 1}
\bar C_P^L &\mathop  \to \limits^{a.s.} \frac{{{\rm{2}}\Omega _k^{}{p_u}(M - K)\Gamma \left( {{m_k} + 1} \right)\Gamma \left( {\frac{{v - 2}}{v}} \right)}}{{v\ln \left( 2 \right)({R^2} - R_0^2)\Gamma \left( {{m_k}} \right)\Gamma \left( {\frac{{v - 2}}{v} + 1} \right)}}\sum\limits_{i = 0}^\infty  {\frac{{\Gamma \left( {i + 1} \right)\Gamma \left( {i + 1} \right)\Gamma \left( 2 \right)\Gamma \left( {i + \frac{{v - 2}}{v}} \right)}}{{\Gamma \left( 1 \right)\Gamma \left( 1 \right)\Gamma \left( {i + 2} \right)\Gamma \left( {\frac{{v - 2}}{v}} \right)}}}\notag \\
& \quad\times \frac{{\Gamma \left( {\frac{{v - 2}}{v} + 1}
\right)\Gamma \left( {{m_k} + i + 1} \right)}}{{\Gamma \left( {i +
\frac{{v - 2}}{v} + 1} \right)\Gamma \left( {{m_k} + 1}
\right)}}\left( {\frac{1}{{{R_0}^{v - 2}}}\frac{{{{\left( {\frac{{ -
\Omega _k^{}{p_u}(M - K)}}{{{R_0}^v}}} \right)}^i}}}{{i!}} -
\frac{{\rm{1}}}{{{R^{v - 2}}}}\frac{{{{\left( {\frac{{ - \Omega
_k^{}{p_u}(M - K)}}{{{R^v}}}} \right)}^i}}}{{i!}}} \right).
\end{align}Based on the definition of Pochmann symbol, the above expression
(\ref{average perfect proof 1}) can be expressed as
\begin{align}\label{average perfect proof 2}
\bar C_P^L &\mathop  \to \limits^{a.s.} \frac{{{\rm{2}}\Omega _k^{}{p_u}(M - K)\Gamma \left( {{m_k} + 1} \right)\Gamma \left( {\frac{{v - 2}}{v}} \right)}}{{v\ln \left( 2 \right)({R^2} - R_0^2)\Gamma \left( {{m_k}} \right)\Gamma \left( {\frac{{v - 2}}{v} + 1} \right)}}\notag\\
 &\quad\times \Bigg\{ \frac{1}{{{R_0}^{v - 2}}}\sum\limits_{i = 0}^\infty  {\frac{{{{\left( 1 \right)}_i}{{\left( 1 \right)}_i}{{\left( {\frac{{v - 2}}{v}} \right)}_i}{{\left( {{m_k} + 1} \right)}_i}}}{{{{\left( 2 \right)}_i}{{\left( {\frac{{v - 2}}{v} + 1} \right)}_i}}}} \frac{{{{\left( {\frac{{ - \Omega _k^{}{p_u}(M - K)}}{{{R_0}^v}}} \right)}^i}}}{{i!}} \notag\\
&\quad \quad\quad\quad- \frac{{\rm{1}}}{{{R^{v - 2}}}}\sum\limits_{i
= 0}^\infty {\frac{{{{\left( 1 \right)}_i}{{\left( 1
\right)}_i}{{\left( {\frac{{v - 2}}{v}} \right)}_i}{{\left( {{m_k} +
1} \right)}_i}}}{{{{\left( 2 \right)}_i}{{\left( {\frac{{v - 2}}{v}
+ 1} \right)}_i}}}} \frac{{{{\left( {\frac{{ - \Omega _k^{}{p_u}(M -
K)}}{{{R^v}}}} \right)}^i}}}{{i!}} \Bigg\},
\end{align}based on which and together with
\cite[16.2.1]{NIST_handbook_of_mathematical_functions}, Eq.
(\ref{capacity average theorem}) is obtained and then proof is
ended.

\section{Proof of Theorem 3}

Following a similar logic as the derivation of Eq. (\ref{perfect
proof 3}), the expectation of first item of Eq.
(\ref{capacity ref im}) over $\beta_k$ can be derived to be
\begin{align}\label{imperfect proof 1}
&{{\Xi _{\rm{1}}}\left(
{{D_k},{m_k},{p_u},v} \right)}\notag\\ &\buildrel \Delta \over = E\left[ {{{\log }_2}\left( {1 + \tau p_u^2\left( {M - K} \right)\beta _k^2} \right)} \right]\notag \\
& =  - \frac{1}{{\ln \left( 2 \right)\Gamma \left( {{m_k}} \right){\Omega _k}^{{m_k}}}}\sum\limits_{i = 0}^\infty  {\frac{1}{{i + 1}}} {\left( { - \frac{{\tau p_u^2(M - K)}}{{D_k^{2v}}}} \right)^{i + 1}} \int_0^\infty  {{\mu _k}^{{m_k} + 2i+1}\exp \left( { - \frac{{{\mu _k}}}{{{\Omega _k}}}} \right)d} {\mu _k}\notag \\
& = \frac{{\tau \Omega _k^2p_u^2(M - K)}}{{\ln \left( 2
\right)\Gamma \left( {{m_k}} \right)D_k^{2v}}}\sum\limits_{i =
0}^\infty  {\frac{1}{{i + 1}}}\Gamma \left( {{m_k} + 2i + 2}
\right){\left( { - \frac{{\tau \Omega _k^2p_u^2(M -
K)}}{{D_k^{2v}}}} \right)^i},
\end{align}where the last step is obtained by using
\cite[3.326.2]{Table_of_Integrals}. We apply the expression given by \cite[8.335.1]{Table_of_Integrals}
to derive the integral in (\ref{imperfect proof 1}), which can be
written as
\begin{align}\label{imperfect proof 2}
\Gamma \left( {2x} \right) = \frac{{{2^{2x - 1}}}}{{\sqrt \pi
}}\Gamma \left( x \right)\Gamma \left( {x + \frac{1}{2}} \right).
\end{align}Substituting (\ref{imperfect proof 2}) into (\ref{imperfect proof
1}), $I_1$ can be reformulated as
\begin{align}\label{imperfect proof 3}
{{\Xi _{\rm{1}}}\left(
{{D_k},{m_k},{p_u},v} \right)}& = \frac{{\tau \Omega _k^{2}p_u^2\left( {M - K} \right){2^{{m_k}+1}}}}{{\sqrt \pi  \ln \left( 2 \right)\Gamma \left( {{m_k}} \right)D_k^{2v}}}\sum\limits_{i = 0}^\infty  {\frac{1}{{i + 1}}} \Gamma \left( {i + \frac{{{m_k}}}{2} + 1} \right)\notag \\
& \quad\times \Gamma \left( {i + \frac{{{m_k}}}{2} +  \frac{3}{2}}
\right){\left( { - \frac{{4\tau \Omega _k^{2}p_u^2\left( {M - K}
\right)}}{{D_k^{2v}}}} \right)^i}.
\end{align} Multiplying the same items on both the numerators and denominators in the
fractions, the equality does not change and Eq. (\ref{imperfect proof 3}) can be further rewritten as
\begin{align}\label{imperfect proof 4}
{{\Xi _{\rm{1}}}\left(
{{D_k},{m_k},{p_u},v} \right)}& = \frac{{\tau \Omega _k^2p_u^2\left( {M - K} \right){2^{{m_k} + 1}}\Gamma \left( {\frac{{{m_k}}}{2} + 1} \right)\Gamma \left( {\frac{{{m_k}}}{2} + \frac{3}{2}} \right)}}{{\sqrt \pi  \ln \left( 2 \right)\Gamma \left( {{m_k}} \right)D_k^{2v}}}\notag \\
& \quad\times \sum\limits_{i = 0}^\infty  {} \frac{{\Gamma \left( {i + \frac{{{m_k}}}{2} + 1} \right)\Gamma \left( {i + \frac{{{m_k}}}{2} +  \frac{3}{2}} \right)\Gamma \left( {i + 1} \right)\Gamma \left( {i + 1} \right)}}{{\Gamma \left( {\frac{{{m_k}}}{2} + 1} \right)\Gamma \left( {\frac{{{m_k}}}{2} +  \frac{3}{2}} \right)\Gamma \left( 1 \right)\Gamma \left( 1 \right)}}\notag \\
& \quad\times \frac{{\Gamma \left( 2 \right)}}{{\Gamma \left( {i + 2} \right)}}\frac{{{{\left( { - \frac{{4\tau \Omega _k^{2}p_u^2\left( {M - K} \right)}}{{D_k^{2v}}}} \right)}^i}}}{{\Gamma \left( {i + 1} \right)}}\notag \\
& = \frac{{\tau \Omega _k^2p_u^2\left( {M - K} \right){2^{{m_k} + 1}}\Gamma \left( {\frac{{{m_k}}}{2} + 1} \right)\Gamma \left( {\frac{{{m_k}}}{2} + \frac{3}{2}} \right)}}{{\sqrt \pi  \ln \left( 2 \right)\Gamma \left( {{m_k}} \right)D_k^{2v}}}\notag \\
& \quad\times \sum\limits_{i = 0}^\infty  {} \frac{{{{\left(
{\frac{{{m_k}}}{2} + 1} \right)}_i}{{\left( {\frac{{{m_k}}}{2} +
\frac{3}{2}} \right)}_i}{{\left( 1 \right)}_i}{{\left( 1
\right)}_i}}}{{{{\left( 2 \right)}_i}}}\frac{{{{\left( { -
\frac{{4\tau \Omega _k^{2}p_u^2\left( {M - K} \right)}}{{D_k^{2v}}}}
\right)}^i}}}{{i!}}.
\end{align}Applying \cite[16.2.1]{NIST_handbook_of_mathematical_functions} to
Eq. (\ref{imperfect proof 4}), the expectation of first term over
large-scale fading can be expressed as ${{\Xi _{\rm{1}}}\left(
{{D_k},{m_k},{p_u},v} \right)}$ of Eq. (\ref{capacity theorem im}).
Since ${\log _2}\left( {1 + \tau p_u^{}\beta _k^{}} \right)$ is
similar to Eq. (\ref{capacity ref}), the expectation of second term
of Eq. (\ref{capacity ref im}) over $\beta_k$ can be also expressed
as ${{\Xi _{\rm{2}}}\left( {{D_k},{m_k},{p_u},v} \right)}$ of Eq.
(\ref{capacity theorem im}). Thus the proof is ended.

\newpage

\begin{figure}[!t]
\centering
\includegraphics[width=2.8in]{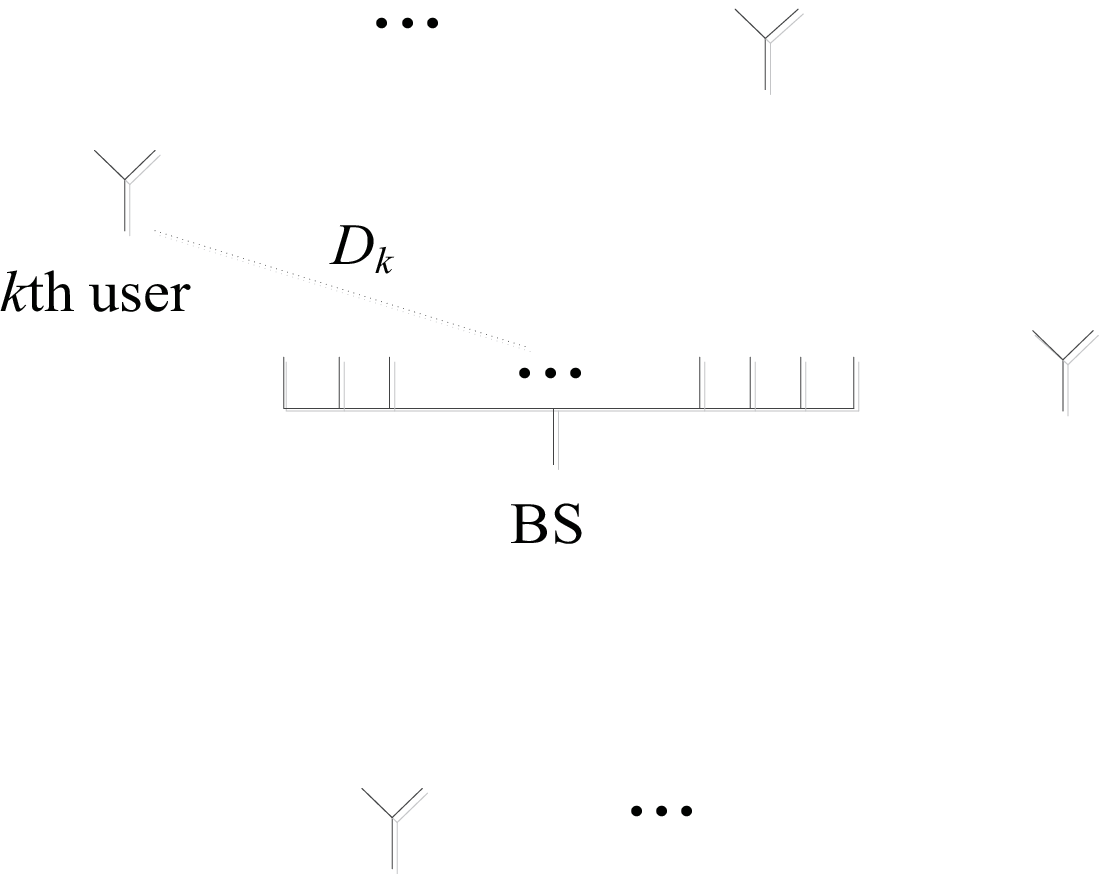}
\caption{The system model of uplink MU-MIMO system.}
\label{system_model_central.eps}
\end{figure}

\begin{figure}[!t]
\centering
\includegraphics[width=3.6in]{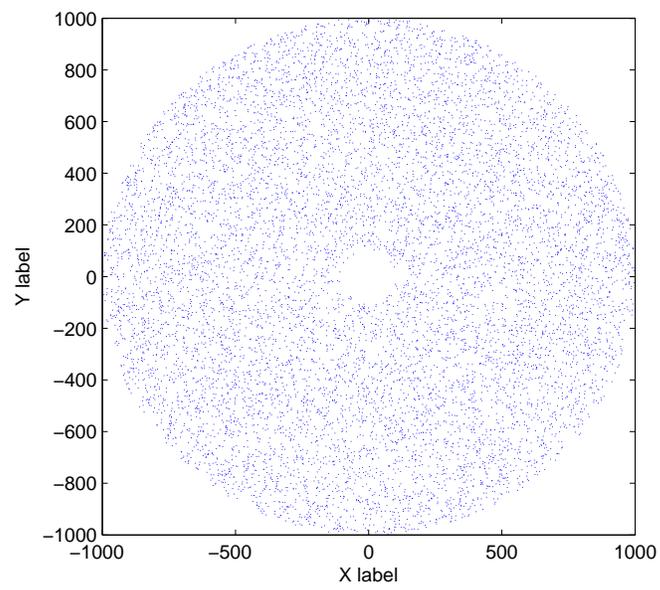}
\caption{The distribution of the users in the cell.}
\label{User_location.eps}
\end{figure}

\begin{figure}[!t]
\centering
\includegraphics[width=3.6in]{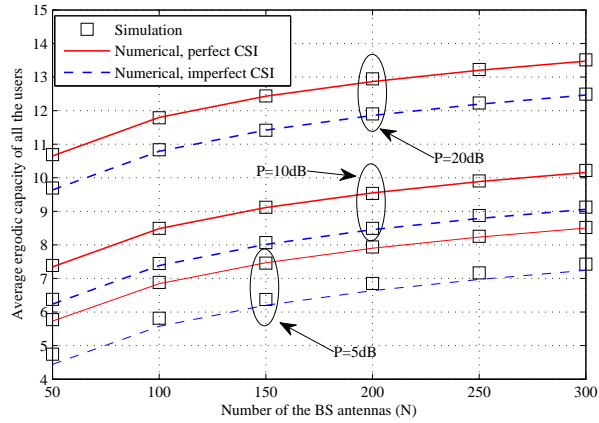}
\caption{The average ergodic capacity of all the users with
different number of BS antennas $N$ in the case of $K=9$, $m=3.3$,
and $P=5dB, 10dB, 20dB$.} \label{Capacity_different_N.eps}
\end{figure}

\begin{figure}[!t]
\centering
\includegraphics[width=3.6in]{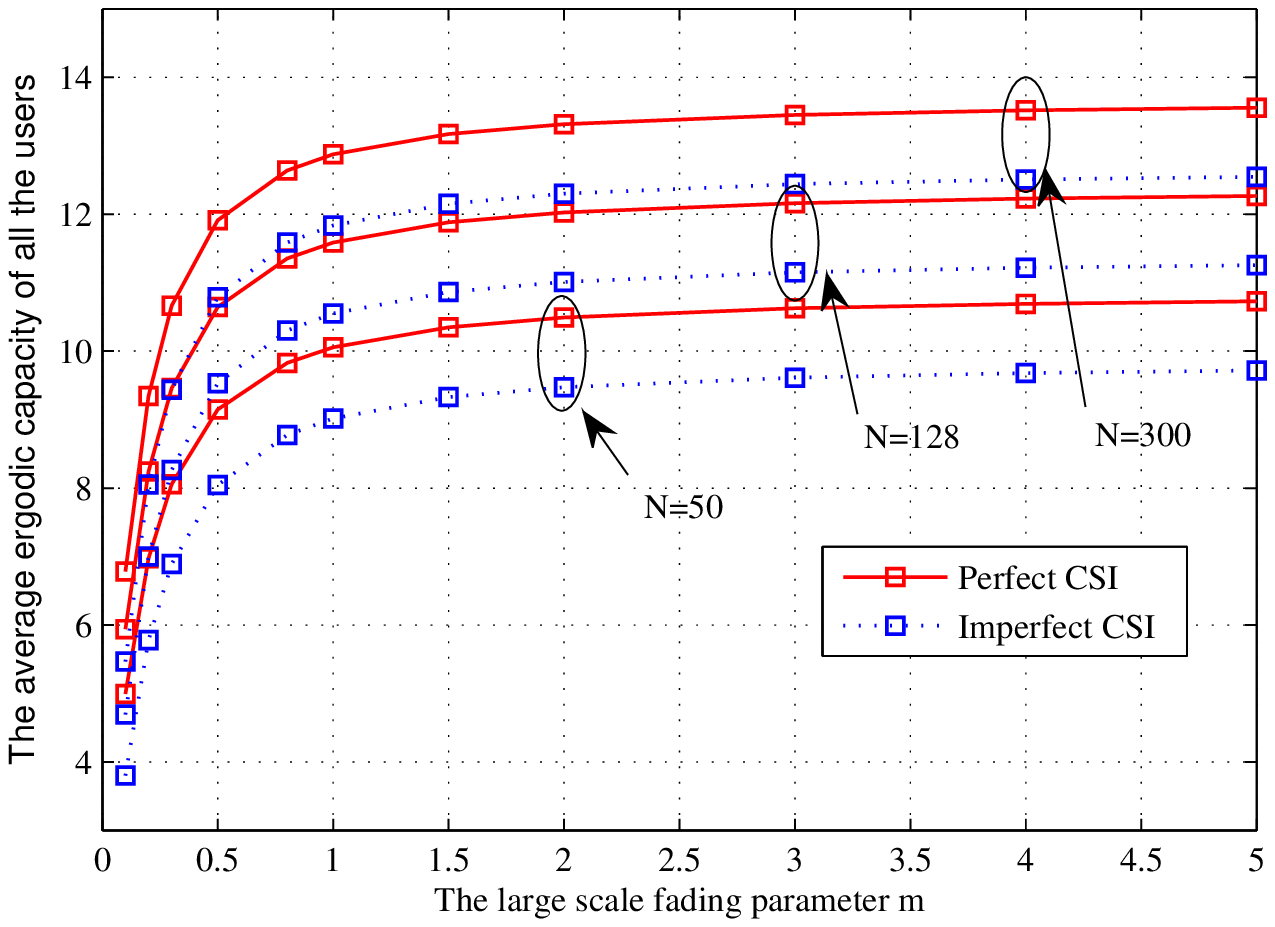}
\caption{The average ergodic capacity of all the users with
different large scale fading parameter $m$ in the case of $K=9$,
$N=50, 128, 300$, and $P=20dB$.}
\label{Capacity_different_naka_m_antennas.eps}
\end{figure}

\begin{figure}[!t]
\centering
\includegraphics[width=3.6in]{}
\caption{The average ergodic capacity of all the users with
different large scale fading parameter $m$ in the case of $K=9$,
$N=128$, and $P=5dB, 10dB, 20dB$.}
\label{Capacity_different_naka_m_power.eps}
\end{figure}


\begin{thebibliography}{99}

\bibitem{yindijing_1}
Y.~Jing and B.~Hassibi, ``Distributed space-time coding in wireless
relay networks,'' \emph{IEEE Trans. Wireless Commun.}, vol. 5, no.
12, pp. 3524--3536, Dec. 2006.


\bibitem{shijin_AF}
S.~Jin, M.~R. McKay, C.~Zhong, and K.-K.~Wong, ``Ergodic capacity
analysis of amplify-and-forward MIMO dual-hop systems,'' \emph{IEEE
Trans. Inform. Theory}, vol.~56, no.~5, pp. 2204--2224, May 2010.


\bibitem{shijin_massive}
J.~Zhang, C.-K.~Wen, S.~Jin, X.~Gao, and K.-K. Wong, ``On capacity
of large-scale MIMO multiple access channels with distributed sets
of correlated antennas,'' \emph{IEEE J. Sel. Areas Commun.},
vol.~31, no.~2, pp. 133--148, Feb. 2013.

\bibitem{Larsson_uplink}
H.~Q.~Ngo, E.~G.~Larsson, and T.~L. Marzetta, ``Energy and spectral
efficiency of very large multiuser MIMO systems,'' \emph{IEEE Trans.
Commun.}, vol.~61, no.~4, pp. 1436--1449, Apr. 2013.



\bibitem{large_number}
H. Cram{\'{e}}r, \emph{Random Variables and Probability
Distributions.}\hskip 1em plus 0.5em minus 0.4em\relax Cambridge
University Press, 1970.



\bibitem{generalized_fading_1}
A. Laourine, M. -S. Alouini, S. Affes, and A. St{\'{e}}phenne, ``On
the capacity of generalized-$K$ fading channels,'' {\em IEEE Trans.
Wireless Commun.}, vol.~7, no. 7, pp. 2441-2445, Jul. 2008.

\bibitem{C_Zhong_Nakagami}
C. Zhong, K.-K. Wong, and S. Jin, ``Capacity bounds for MIMO
Nakagami-$m$ fading channels," \emph{IEEE Trans. Signal Process.},
vol. 57, no. 9, pp. 3613¨C3623, Sep. 2009.



\bibitem{gaofeifei_training}
F.~Gao, T.~Cui, and A.~Nallanathan, ``Optimal training design for
channel estimation in decode-and-forward relay networks with
individual and total power constraints,'' \emph{IEEE Trans. Signal
Process.}, vol.~56, no.~12, pp. 5937--5949, Dec. 2008.

\bibitem{D_Tse}
D. Tse and P. Viswanath, \emph{Fundamentals of Wireless
Communication.}\hskip 1em plus 0.5em minus 0.4em\relax Cambridge
University Press, 2005.

\bibitem{NIST_handbook_of_mathematical_functions}
F. W. Olver, D. W. Lozier, R. F. Boisvert, and C. W. Clark,
\emph{NIST handbook of mathematical functions.}\hskip 1em plus 0.5em
minus 0.4em\relax Cambridge University Press New York, NY, USA,
2010.




\bibitem{Table_of_Integrals}
I.~S. Gradshteyn and I.~M. Ryzhik, \emph{Table of Integrals, Series,
and Products, 7th ed.}\hskip 1em plus 0.5em minus 0.4em\relax
Academic Press, New York, 2007.



\end{thebibliography}
\end{document}